\newtheorem{lemma}{Lemma}
\newcommand\copyrighttext{%
\footnotesize \textcopyright \enspace 2015 IEEE. Personal use of this material is permitted. Permission from IEEE must be obtained for all other uses, in any current or future media, including reprinting/republishing this material for advertising or promotional purposes, creating new collective works, for resale or redistribution to servers or lists, or reuse of any copyrighted component of this work in other works.  DOI: \href{https://doi.org/10.1109/ICCW.2015.7247332}{10.1109/ICCW.2015.7247332}
}
\newcommand\copyrightnotice{%
\begin{tikzpicture}[remember picture,overlay]
\node[anchor=south] at (current page.south) {\fbox{\parbox{\dimexpr\textwidth-\fboxsep-\fboxrule\relax}{\copyrighttext}}};
\end{tikzpicture}%
}
\begin{document}

\title{Modelling Machine Type Communication \\ in IEEE 802.11ah Networks}

\author{\IEEEauthorblockN{Evgeny Khorov$^{+,*}$, Alexander Krotov$^{*}$, Andrey Lyakhov$^{*}$}
\IEEEauthorblockA{$^{*}$Institute for Information Transmission Problems,  Russian Academy of Sciences, Moscow, Russia\\
$^{+}$Moscow Institute of Physics and Technology, Moscow, Russia 
\\  Email: \{khorov, krotov, lyakhov\}@iitp.ru}
}

\maketitle
\copyrightnotice
\begin{abstract}
Wi-Fi was originally designed to provide broadband wireless Internet access for devices which generate rather heavy streams. And Wi-Fi succeeded. The coming revolution of the Internet of Things with myriads of autonomous devices and machine type communications (MTC) traffic raises a question: can the Wi-Fi success story be repeated in the area of MTC? Started in 2010, IEEE 802.11 Task Group ah (TGah) has developed a draft amendment to the IEEE 802.11 standard, adapting Wi-Fi for MTC requirements. The performance of novel channel access enhancements in MTC scenarios can hardly be studied with models from Bianchi's clan, which typically assume that traffic load does not change with time. This paper contributes with a pioneer analytical approach to study Wi-Fi-based MTC, which can be used to investigate and customize many mechanisms developed by TGah. \footnote{The research was done in IITP RAS and supported by the Russian Science Foundation (agreement No 14-50-00150)}

Keywords: Hierarchic channel access, RAW, Packet transmission time, MTC.
\end{abstract}

\section{Introduction}

Wi-Fi was originally designed to provide broadband wireless Internet access for personal computers and laptops. In 2000s it connected  smartphones and other gadgets generating heavy streams to the Internet. Wi-Fi succeeded: in a modern city, one can hardly find a Wi-Fi-less place.

Continuously increasing user demands for more and more throughput is reflected in the IEEE 802~LAN/MAN Standards Committee (LMSC) and Wi-Fi Alliance activities, which result in dozens of IEEE 802.11 standard amendments improving Wi-Fi. In addition to high speed anytime and anywhere Internet access, the concept of smart city includes processing data gathered from myriads of autonomous battery-powered devices: various sensors, cameras, etc. Thus, the recent challenges are connected with machine type communications (MTC), since the MTC market with its tremendous number of devices is very attractive to industry. A typical MTC scenario is providing Internet access for thousands of sensors and actuators (e.g. parking sensors, water meters, fire detectors, etc.) communicating with rarely sent very short messages. Although some other technologies --- ZigBee, RFID, etc. --- are already being used for MTC, they have their own limitations, and the Wi-Fi community sees an opportunity for outperforming them and extending Wi-Fi application area to MTC scenarios. To address MTC issues, IEEE~802~LMSC is currently developing the .11ah amendment \cite{802.11ah} for the IEEE~802.11 standard, which seems to be ready by 2016\footnote{Another .11ah use case, offloading, is left beyond the scope of the paper.}.

The IEEE~802.11ah network shall support up to 6000 low power stations (STAs) simultaneously connected to an access point (AP), which results in extremely high contention, high number of collisions and transmission retries compared to usual Wi-Fi networks. Every transmission attempt consumes battery power and reduces overall STA lifetime. The situation ever worsens if more powerful offloading STAs are present in the neighborhood. Thus, power efficient channel access for a crowd of battery powered STAs is a challenging task.  

The straightforward approach for this task is to allocate time intervals and to assign each interval to a single STA, as it is done in MCCA --- deterministic channel access in Wi-Fi Mesh networks \cite{automation}. Although increasing the throughput in saturated networks, such a collision free approach results in extremely long delays in normal load conditions and can hardly be used for communication of thousands of STAs with sporadic and unpredictable traffic triggered by external events.
To reduce the number of retries, and thus to extend battery lifetime, new contention limiting mechanisms have been included into the standard draft \cite{802.11ah}. One of them is Restricted Access Window (RAW), a novel hierarchic channel access mechanism, which assigned some time intervals for a group of STAs and then they contend for the channel in the assigned intervals. The core contribution of the paper is a pioneer mathematical approach which can be used to evaluate RAW efficiency and to adaptively choose its parameters in typical MTC scenarios. Apart from that, the approach can be used to study and customize parameters of some other Wi-Fi mechanisms, which are based on similar hierarchic channel access mechanisms.

The rest of the paper is organized as follows. In Section~\ref{sec:RAW} we give a simplified description of RAW. For the detailed one, please refer to \cite{comcom11ah}. Section~\ref{sec:problem} describes a scenario typical for MTC communication and proposes a \emph{general} problem statement. In Section~\ref{sec:related}, we explain why the numerous existing Wi-Fi models cannot be used to solve this problem. We develop our model in Section~\ref{sec:model} and present numerical results in Section~\ref{sec:results}. Finally, Section \ref{sec:concl} concludes the paper.

\section{Restricted Access Window}

\label{sec:RAW}

The main idea of RAW is to reduce the number of STAs concurrently accessing the medium and to distribute channel accesses over time. For that, the AP selects a group of STAs and assigns it to a time interval, called the RAW slot. The STAs are forbidden to transmit in alien RAW slots.

By dividing STAs into groups and assigning them RAW slots, the AP reduces contention. However, the AP typically does not know in advance which STAs have frames for transmission. In this case, assigning RAW slots individually to each STA results in excess consumption of channel resources and reduces throughput. For this reason, a group may contain a \emph{large} number of STAs, while only some of them have frames for transmission in the beginning of the RAW slot. In particular, the standard assumes that the AP may take into account STA type, power constraints, traffic pattern while grouping STAs. Having estimated the number of STAs having data to transmit, the AP selects the RAW slot duration and position.

The AP periodically broadcasts all RAW parameters in beacons, letting STAs know which group they belong to and when the group RAW slot occurs. Since transmissions outside the RAW slot are not protected from collisions at all, it is reasonable for STAs to sleep always except for their RAW slots and beacon transmission times, saving energy.

Inside their RAW slots, STAs transmit with the legacy EDCA (Enhanced Distributed  Channel Access). EDCA implements the CSMA/CA (Carrier Sense Multiple Access with Collision Avoidance) method. It means that before starting transmission the STA senses medium and does not transmit until the medium is idle. Besides that, to reduce the probability of a collision right after the medium becomes idle, EDCA uses truncated binary exponential backoff.

Since the contention conditions inside and outside the RAW slot differ, the STA uses two different backoff functions inside and outside the RAW slot. In particular, when its RAW slot begins, the STA creates a new backoff function and initializes backoff counter with a random integer value drawn from the uniform distribution over interval $[0,CW_0 - 1]$. Then the STA starts sensing the medium. Each time, the medium is idle for backoff slot $\sigma$, the STA decrements the backoff counter. If the medium becomes busy, the STA freezes the backoff counter.  The backoff counter is resumed and decremented after the medium is idle for some time. This time equals $AIFS$ if the STA received a frame successfully, or $EIFS=T_{ACK} + AIFS$ if the STA was not able to successfully decode frame, where $T_{ACK}$ is the time needed to transmit an acknowledgment frame (ACK).

When the backoff counter reaches 0, the STA checks if it can transmit its frame and receive the ACK within the RAW slot. If the frame exchange sequence crosses the RAW slot boundary, the STA does not transmit and can switch to the doze state. Otherwise, it transmits the frame and waits for an ACK from the AP. If the ACK is received within $T_{ACK}$, the STA considers that the frame has been successfully transmitted.  If no ACK is received and retry limit $RL$ is not reached, the STA starts the next transmission attempt. Prior transmission attempt $i$ of a packet, the STA draws new backoff counter value from the uniform distribution over the interval $[0,CW_{i - 1} - 1]$, where
\[
CW_i = \begin{cases}
CW_{min}, & i = 0,\\
\min \{CW_{max}, 2 CW_{i - 1}\}, & i > 0.
\end{cases}
\]

\section{Problem Statement}
\label{sec:problem}

Consider a network with many power limited STAs communicating with an AP. Let $N$ STAs start accessing the channel with EDCA at the beginning of the RAW slot. Since MTC traffic consists of rarely sent messages, let each of the $N$ STAs have the only frame for transmission. Having transmitted its frame, STA stops contending for the channel and can return to the doze state. Thus, the number of STAs accessing the channel decreases with time.

Let the STAs be located within the transmission range of each other and transmission errors be only caused by collisions. We also assume that the AP does not transmit anything except for the ACKs.

Let us find the minimal RAW slot duration required for:

\begin{enumerate}
\item[A.] an arbitrary chosen STA to successfully transmit its frame with some predefined probability;
\item[B.] all $N$ STAs to successfully transmit their frames with some predefined probability.
\end{enumerate}

To solve these problems, in Section~\ref{sec:model} we find:

\begin{enumerate}
\item[A.] the distribution of time $P_A^N(\tau)$ needed for an arbitrary STA to successfully transmit its frame;
\item[B.] the distribution of time $P_B^N(\tau)$ needed for all $N$ STAs to transmit their frames\footnote{Note that problems A and B are not reducible to each other.}.
\end{enumerate}

Such a model can be applied for various purposes. First of all, it can be used to find the distribution of energy consumption during the accessing in RAW process, which is rather important for energy harvesting devices powered by a small accumulator or even capacitor. Apart from that, the model can be used to study other mechanisms, e.g. Wi-Fi Power Management framework which works as follows. The AP buffers frames which are destined for STAs in the power saving (PS) mode. From time to time, these STAs wake up to receive beacons. Beacons contain Traffic Indication Map (TIM) information elements indicating for which STAs the AP has buffered frames. If the AP has no frames destined for a STA, the STA can switch to the doze state. Otherwise, it can retrieve these frames by sending special PS-Poll frame by means of EDCA. As the response to these frames, the AP send buffered data. To protect PS-Poll from collisions, the AP may use RAW or other protection mechanisms, restricting non-PS STAs accessing the channel during some time interval. The model allows to choose the interval duration.

\section{Related Work}
\label{sec:related}

The widest known mathematical model of DCF --- the basic random channel access used in Wi-Fi networks --- was developed by Bianchi in \cite{bianchi2000performance}. The model allows estimating maximal throughput, assuming that a constant number of active STAs work in saturated conditions. So the model cannot be used to solve the problems stated in Section \ref{sec:problem}, since in these problems the number of active STAs decreases with time. However, paper \cite{bianchi2000performance} contains basic principles of Wi-Fi modeling. In particular, it introduces a concept of virtual slot, which is the time interval between consequent backoff counter changes.

Paper \cite{zheng2013performance} presents a model, which allows estimating the maximal throughput (again, in saturated scenarios)  if all STAs are equally divided into several groups and each slot is assigned to a group. It proves that RAW manifold increases throughput in a network with thousands STAs, however the model can not be used for our problems for aforesaid reasons.

In \cite{buratti2010be}, the authors consider another protocol, IEEE 802.15.4 that uses similar to EDCA channel access. However, in .15.4 a STA senses the channel only when the backoff ends. Although both papers present performance evaluation in the scenario similar to the one described in this paper, the authors assume that collision probability is constant, while in reality, both varying contention window and the number of STAs having packets to transmit make collision probability change with time.

The authors of \cite{liu2013power} study the power saving  mechanism.
They have developed a model, which allows estimating the average energy consumed by a STA and average time used by a STA to retrieve its data. As shown in Section~\ref{sec:results}, even though the model developed in \cite{liu2013power} can be used to find the average frame transmission time for a STA, it can not be used to find the correct time distribution required in problem A. Besides that, it can not be used at all to solve problem B.

So, we can conclude that a new mathematical model is required to solve the problems A and B.

\section{Model}
\label{sec:model}

\subsection{Markov Processes}

The core contribution of the paper is the model of the  described in Section \ref{sec:problem} process of the channel access of $N$ STAs, each of which has a frame in the beginning of some limited time interval, i.e. the RAW slot. To simplify further description, we assume that all frames are of the same size, however the model can be easily extended to the general case.

The model consists of two Markov chains referred to as process A and process B. These processes describe the behaviour of an arbitrarily selected STA and all STAs, respectively, to solve the problems defined in Section~\ref{sec:problem}. Both processes use the discrete and integer time scale with a timeunit equal to a virtual time slot. Note that this discrete time scale does not directly relate to the system time. Depending on the state of the virtual slot -- empty, containing a successful transmission or a collision -- its duration is $T_{e}$, $T_{s}$ or $T_{c}$, respectively.  Since all STAs are in the transmission range of each other, any slot is of the same type for every STA.

We denote the state of process A as $(t, c, s, r)$, where $t$ is the model time, i.e. the number of virtual slots since the beginning of the RAW slot ($t \ge 0$), $c$ and $s$ are the numbers of collision and successful slots, $r$ is the retry counter value of the chosen STA, i.e., the number of unsuccessful transmission attempts made by the STA prior to slot $t$. The number of empty slots till slot $t$
is $t - c - s$. When the STA successfully transmits its frame or $r$ reaches the retry limit $RL$, the process goes into the successful or unsuccessful absorbing state, correspondingly.

We denote the states of process B as $(t, c, s)$. The process terminates when all STAs have successfully transmitted their frames, i.e., $s = N$, so states $(t, c, N)$ are absorbing.

The rest of the Section is organized as follows. We state and prove a lemma needed for the analysis of processes A and B in Subsection~\ref{sec:txprobability}. Then in Subsections~\ref{sec:proc-A}~and~\ref{sec:proc-B}, we describe possible transitions and their probabilities in both processes.
In Subsection~\ref{sec:calc}, we show how to compute the state probabilities for both processes.

\subsection{Conditional Probability of a Transmission in a Virtual Slot}
\label{sec:txprobability}

Let us define two events. Event $(t,r)$ means that process A goes through the state with given number $t$ of the whole virtual slots and number $r$ of transmission attempts. Event $TX$ means that the STA transmits. Thus, $\Pr(TX|t, r)$ is the probability of the STA to transmit a packet in slot $t$ provided that by the beginning of this slot it has made $r$ unsuccessful transmissions.

\begin{lemma}
If the number of STAs is infinite ($N \to \infty$), then
\begin{equation}
\label{eq:txprobability}
\Pr(TX|t, r) = \frac{a(t, r)}{b(t, r)},
\end{equation} where
\begin{align*}
a(t,r) &= \begin{cases}
	\frac{1}{CW_0}, & r = 0, 0 \le t < CW_0, \\
	0, & r = 0, t \ge CW_0, \\
	0, & r \ge RL, \\
	\frac{1}{CW_r} \sum \limits _{i = t - CW_r}^{t - 1} a(i,r - 1), & 0 < r < RL;
	\end{cases}\\
b(t,r) &= \begin{cases}
1 - \sum\limits_{i = 0}^{t} a(i, r), & r = 0,\\
\sum\limits_{i = 0}^{t} (a(i, r - 1) - a(i, r)), & r > 0.
\end{cases}
\end{align*}
\end{lemma}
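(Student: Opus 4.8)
The plan is to give a probabilistic meaning to $a(t,r)$ and $b(t,r)$ and then read off \eqref{eq:txprobability} as a ratio. I would interpret $a(t,r)$ as the unconditional probability that the tagged STA makes a transmission exactly at virtual slot $t$ and that this transmission is its $(r{+}1)$-th one (equivalently, that its retry counter is $r$ at slot $t$), and $b(t,r)$ as the probability of the conditioning event $(t,r)$, i.e.\ that the STA enters slot $t$ still contending with retry counter $r$. Once both are established, $\Pr(TX\mid t,r)=a(t,r)/b(t,r)$ is immediate from the definition of conditional probability.

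The first and most important step is to extract what the hypothesis $N\to\infty$ buys us: \emph{every} transmission of the tagged STA collides almost surely. On the virtual-slot scale every still-contending STA decrements its counter by one per slot and transmits when it reaches zero, so the unconditional probability that any given STA transmits at a fixed slot $t$ is strictly positive (at least $1/CW_0$ for $0\le t<CW_0$ coming from attempt $0$ alone). Hence the expected number of STAs transmitting at slot $t$ is $\Theta(N)$, and conditioning on the tagged STA transmitting removes only one STA from an infinite pool; therefore, with probability one, at least one other STA transmits in the same slot and the tagged STA's attempt fails. Consequently ``unsuccessful transmission'' and ``transmission'' coincide for the tagged STA, its retry counter equals the number of transmissions it has made, and, crucially, the collision indicator that would otherwise weight each term of the recursion is identically $1$. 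This is the step I expect to be the main obstacle: making rigorous that the count of competing transmitters diverges almost surely, not merely in expectation, despite the coupling induced by shared collision events; I would handle it by noting that, conditioned on any history, a positive fraction of STAs are ``fresh'' on attempt $0$ and draw independent backoffs, so a second-moment/concentration argument forces the count to infinity.

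With collisions occurring surely, I would derive the recursion for $a(t,r)$ by conditioning on the STA's previous transmission time and its fresh backoff draw. For $r=0$ the initial counter is uniform on $[0,CW_0-1]$, so the first transmission lands on slot $t$ with probability $1/CW_0$ for $0\le t<CW_0$ and $0$ afterwards, giving the first two cases; for $r\ge RL$ the STA has exhausted its retries and cannot transmit, giving $a(t,r)=0$. For $0<r<RL$, if the $r$-th transmission occurred at slot $i$ (probability $a(i,r-1)$, a collision by the previous step), the STA redraws uniformly on $[0,CW_r-1]$ and transmits next at slot $i+1+k$; setting $i+1+k=t$ forces $i\in[t-CW_r,\,t-1]$, and summing $a(i,r-1)$ over this window weighted by $1/CW_r$ yields exactly the stated recursion.

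Finally I would assemble $b(t,r)$ from the $a$'s as the probability of the event $(t,r)$. For $r=0$ the STA enters slot $t$ with retry counter $0$ precisely when it has not yet transmitted, whose probability is one minus the accumulated transmission probability, i.e.\ the stated complement of a partial sum of $a(i,0)$. For $r>0$ the STA is at slot $t$ with retry counter $r$ exactly when it has already made its $r$-th transmission but not yet its $(r{+}1)$-th; writing this as the difference of the two partial sums of $a(i,r-1)$ and $a(i,r)$ gives the telescoping expression for $b(t,r)$. Dividing then completes the proof; the only bookkeeping to watch is the exact summation limit in $b$, namely whether slot $t$ itself is included, which is fixed by the convention adopted for the event $(t,r)$ in the state description of process A.
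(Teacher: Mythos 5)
Your proposal is correct and follows essentially the same route as the paper: identify $a(t,r)$ with $\Pr(TX,t,r)$ and $b(t,r)$ with $\Pr(t,r)$, derive the recursion by conditioning on the slot of the previous (collided) attempt and the fresh uniform backoff draw, obtain $b$ as a complement/telescoping difference of partial sums, and use $N\to\infty$ to equate collision probability with transmission probability so the ratio is the conditional probability by definition. The only difference is that you expand the almost-sure-collision step (which the paper simply asserts) and you correctly flag the same off-by-one ambiguity in the summation limit of $b$ that is already present between the paper's lemma statement and its proof.
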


\begin{proof}

It can be shown that
\[
\Pr(TX,t,r) = \begin{cases}
	\frac{1}{CW_0}, & r = 0, 0 \le t < CW_0, \\
	0, & r = 0, t \ge CW_0, \\
	0, & r \ge RL, \\
	\sum\limits_{i = t - CW_r}^{t - 1} \frac{\Pr(C, i, r - 1)}{CW_r}, & 0 < r < RL,
\end{cases}
\]
where $\Pr(C, i, r)$ is the probability of chosen STA to have a collided transmission in slot $i$ with retry counter $r$. The first three lines of the above equation are obvious. The last line corresponds to a transmission retry.   We use it to calculate the probability of retransmission in a given slot: after a collision with $r < RL - 1$ the STA chooses one of the next $CW_r$ slots for retransmission, each with probability $\frac{1}{CW_r}$.

To find $\Pr(t,r)$, let us notice that the retry counter equals $r$ in the beginning of slot $t$ if and only if it became equal to $r$ at the beginning of some previous slot (for $r = 0$, we consider slot 0, thus the probability of such an event is 1, and for $r > 0$, the probability equals $\sum\limits_{i = 0}^{t-1} \Pr(C, i, r - 1)$), and the STA did not transmit since that slot.  Then,
\[
\Pr(t,r) = \begin{cases}
1 - \sum\limits_{i = 0}^{t - 1} \Pr(TX, i, r), & r = 0,\\
\sum\limits_{i = 0}^{t - 1} \Pr(C, i, r - 1) - \sum\limits_{i = 0}^{t - 1} \Pr(TX, i, r), & r > 0.
\end{cases}
\]

For the infinite number of STAs, collision probability $\Pr(C, t, r)$ equals transmission probability $\Pr(TX, t, r)$. So, $\Pr(TX, t, r) = a(t, r)$, $\Pr(t, r) = b(t, r)$,
and $\Pr(TX|t, r) = \frac{a(t, r)}{b(t, r)}$, by definition.
\end{proof}

\begin{figure}[!htbp]
\centering
\includegraphics[width=0.8\linewidth]{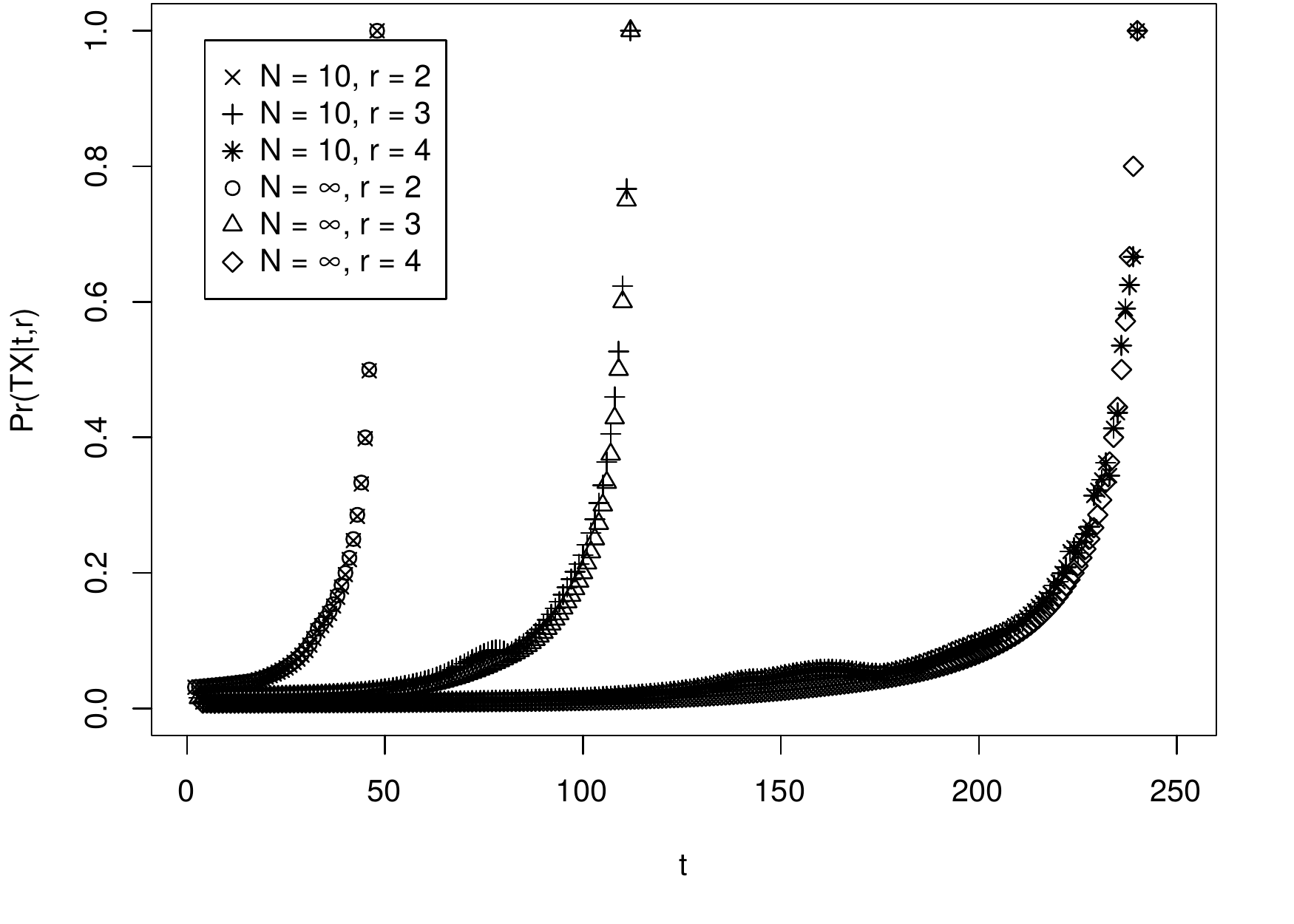}
\caption{\label{fig:tx} Conditional transmission probability $\Pr(TX|t,r)$}
\end{figure}

To simplify further analysis, we assume that $\Pr(TX|t, r)$ does not depend on the number of STAs.  If $r = 0$, it is valid since transmission attempt probability depends only on $CW_{min}$ and does not depend on the number of STAs.  For $r = 2$, $r = 3$ and $r = 4$, the assumption is validated using simulation (see Fig.~\ref{fig:tx}). We can see that the probability for $N = 10$ does not significantly differ from the probability calculated for $N \to \infty$. Moreover, the difference decreases with growth of the $N$. Even when the number of STAs is low, this assumption does not affect dramatically the results obtained for problems A and B, because in this case they likely deliver their frames at the first or at least the second transmission attempt, and when the number of STAs is high, the calculated probability does not significantly differ from the real one even for high values $r$.  For these reasons, we  use~(\ref{eq:txprobability}) for any number of STAs.

\subsection{Process A}
\label{sec:proc-A}

Process A starts in state $(0, 0, 0, 0)$. From state $(t, c, s, r)$ five transitions are possible.
\begin{enumerate}
\item With probability $\Pi_e(t, c, s)$, slot $t$ is empty and the process transits to $(t + 1, c, s, r)$.
\item  With probability $\Pi_s^+(t, c, s)$, the transmission attempt of the chosen STA in slot $t$ is successful and the process transits to the successful absorbing state.
\item With probability $\Pi_s^-(t, c, s)$, the transmission attempt of another STA in slot $t$ is successful and the process transits to $(t + 1, c, s + 1, r)$.
\item With probability $\Pi_c^+(t, c, s)$, slot $t$ is collision and the chosen STA transmits, i.e. the process transits to $(t + 1, c + 1, s, r + 1)$, which can be an unsuccessful absorbing state if $R + 1 = RL$.
\item With probability $\Pi_c^-(t, c, s)$, slot $t$ is collision and the chosen STA does not transmit, so the process transits to $(t + 1, c + 1, s, r)$.
\end{enumerate}
To obtain transition probabilities, let us find probabilities of a slot to be empty, successful or collision, given that the chosen STA does not transmit:
\begin{align*}
	\Pi_e'(t, c, s) & = (1 - \Pr(TX|t, c, s))^{N - s - 1}, \\
	\Pi_s'(t, c, s) & = (N - s - 1) \Pr(TX|t, c, s) \times\\
	                & \times (1 - \Pr(TX|t, c, s))^{N - s - 2}, \\
	\Pi_c'(t, c, s) & = 1 - \Pi_e'(t, c, s) - \Pi_s'(t, c, s),
\end{align*}
where $\Pr(TX|t,c,s)$ is the probability of the chosen  STA to transmit if the process A is in state $(t, c, s)$.

\begin{lemma}
\label{lem:tx}
\[ \Pr(TX|t, c, s) = \frac{\sum_{r = 0}^c \Pr(TX|t, r) \Pr(t, c, s, r)}{\sum_{r = 0}^c \Pr(t, c, s, r)}, \] where $\Pr(t, c, s, r)$ is the probability of process A being in state $(t, c, s, r)$ for given $t$.
\end{lemma}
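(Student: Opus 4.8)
The plan is to prove the identity by the law of total probability, conditioning the transmission event on the chosen STA's (hidden) retry counter $r$, and then reducing each resulting conditional transmission probability to the quantity $\Pr(TX|t,r)$ already supplied by Lemma~1. First I would fix the time index $t$ and note that, when process A is in a state with counts $(t,c,s)$, the retry counter $r$ of the chosen STA is not determined by $(t,c,s)$ alone. Since $r$ is incremented only when the chosen STA itself participates in a collision (transition~4), and each such event also increments $c$ (whereas transition~5 increments $c$ without touching $r$), we always have $0 \le r \le c$. This is exactly what pins the summation range to $r = 0, \dots, c$.

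Next I would expand the transmission probability over these hidden values of $r$ by the law of total probability, writing $\Pr(TX|t,c,s) = \sum_{r=0}^{c} \Pr(TX|t,c,s,r)\,\Pr(r|t,c,s)$. The conditional distribution of the retry counter is handled by the definition of conditional probability, $\Pr(r|t,c,s) = \Pr(t,c,s,r) / \sum_{r'=0}^{c}\Pr(t,c,s,r')$, since the event ``process A visits $(t,c,s)$'' is the disjoint union over $r$ of the events ``process A visits $(t,c,s,r)$''. Substituting this weight into the expansion already produces the claimed fraction, \emph{provided} one is entitled to replace $\Pr(TX|t,c,s,r)$ by $\Pr(TX|t,r)$.

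Establishing that replacement, i.e.\ the conditional independence $\Pr(TX|t,c,s,r) = \Pr(TX|t,r)$, is the step I expect to be the main obstacle. The substance of it is that, once the chosen STA's own history is summarized by the pair $(t,r)$, whether it transmits in slot $t$ is governed solely by its own backoff state, whose distribution Lemma~1 computes as a function of $(t,r)$ alone. The counts $c$ and $s$ record only the \emph{collective} outcomes generated by the other STAs --- together with the bookkeeping of the chosen STA's own collisions, which is already absorbed into $r$ --- so under the model's mean-field decoupling they provide no further information about the chosen STA's backoff counter. Making this precise means showing that further conditioning on $(c,s)$ does not refine the backoff-state distribution beyond what $(t,r)$ already fixes; I would lean directly on the construction of $\Pr(TX|t,r)$ in Lemma~1, which is built purely from the per-STA backoff recursion and never references the aggregate slot counts $c$ or $s$.

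Finally, substituting $\Pr(TX|t,c,s,r) = \Pr(TX|t,r)$ into the total-probability expansion and pulling the common denominator $\sum_{r'=0}^{c}\Pr(t,c,s,r')$ out of the sum yields exactly $\Pr(TX|t,c,s) = \big(\sum_{r=0}^{c}\Pr(TX|t,r)\,\Pr(t,c,s,r)\big) / \big(\sum_{r=0}^{c}\Pr(t,c,s,r)\big)$, which is the stated formula. The only genuine modelling assumption invoked is the conditional independence in the third step; the rest is the elementary algebra of conditional probabilities.
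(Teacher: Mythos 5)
Your proposal is correct and follows essentially the same route as the paper's proof: a total-probability expansion over the hidden retry counter $r$ (with the range $0 \le r \le c$), combined with the modelling assumption that $\Pr(TX|t,c,s,r)=\Pr(TX|t,r)$, i.e.\ independence of $c$ and of the number of contending STAs, which is exactly the assumption adopted in Subsection~\ref{sec:txprobability}. Your discussion of why that conditional-independence step is the only substantive one matches the paper's reasoning; no gap.
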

\begin{proof}
By definition,
\[ \Pr(TX|t, c, s) = \frac{\Pr(TX, t, c, s)}{\Pr(t, c, s)} = \frac{\Pr(TX, t, c, s)}{\sum_r \Pr(t, c, s, r)}. \]
By the law of total probability,
\[ \Pr(TX,t, c, s) = \sum_r \Pr(TX|t, c, s, r) \Pr(t, c, s, r), \]
but $\Pr(TX|t, c, s, r)$ depends on neither $c$ nor the number of STAs accessing the channel, i.e. $N-s$ (according to the assumption from Section \ref{sec:txprobability}). So, $\Pr(TX|t, c, s, r) = \Pr(TX|t, r)$.
Taking into account that $r \leq c$ (the number of transmission attempts made by the chosen STA cannot exceed the total number of collisions), we obtain the Lemma statement.
\end{proof}
Using Lemma~\ref{lem:tx}, we get transition probabilities:
\begin{align*}
\Pi_e^-(t, c, s) &= (1 - \Pr(TX|t, r)) \Pi_e'(t, c, s), \\
\Pi_s^+(t, c, s) &= \Pr(TX|t, r) \Pi_e'(t, c, s), \\
\Pi_s^-(t, c, s) &= (1 - \Pr(TX|t, r)) \Pi_s'(t, c, s), \\
\Pi_c^+(t, c, s) &= \Pr(TX|t, r) (1 - \Pi_e'(t, c, s)), \\
\Pi_c^-(t, c, s) &= (1 - \Pr(TX|t, r)) \Pi_c'(t, c, s).
\end{align*}

\subsection{Process B}
\label{sec:proc-B}

Process B starts in state $(0, 0, 0)$. Similarly, we obtain transition probabilities to states $(t + 1, c, s)$ (empty slot), $(t + 1, c, s + 1)$ (successful), and $(t + 1, c + 1, s)$ (collision):
\begin{align*}
	\Pi_e(t, c, s) &= (1 - \Pr(TX|t, c, s))^{N - s}, \\
	\Pi_s(t, c, s) &= (N - s) \Pr(TX|t, c, s) (1 - \Pr(TX|t, c, s))^{N - s - 1}, \\
	\Pi_c(t, c, s) &= 1 - \Pi_e - \Pi_s.
\end{align*}

\subsection{Computation}
\label{sec:calc}

We obtain probabilities of the processes states step by step, increasing $t$ sequentially and starting with $t = 0$. At each step $t$, we compute:
\begin{enumerate}
\item $\Pr(TX|t, c, s)$, using Lemma~\ref{lem:tx} and process A state probabilities;
\item Process A transition probabilities $\Pi_e'$, $\Pi_s'$, $\Pi_c'$, $\Pi_e^-$, $\Pi_s^+$, $\Pi_s^-$, $\Pi_c^+$, $\Pi_c^-$.
\item Process B transition probabilities $\Pi_e$, $\Pi_s$, $\Pi_c$.
\item Process A and Process B state probabilities $\Pr(t + 1, c, s, r)$ and $\Pr(t + 1, c, s)$, using computed transition probabilities.
\end{enumerate}
We continue the computation until the total probability of absorbing states exceeds some predefined threshold $1 - \epsilon$.

For each state, we can find the time needed to transit to this state:
\[
T(t, c, s) = c T_c + s T_s + (t - c - s) T_e.
\]
Having found probability distributions of the processes states, we obtain the sought time distributions defined in Section \ref{sec:problem}:
\[
P_A^N(\tau) = \smashoperator[l]{\sum_{t, c, s\colon T(t + 1, c, s + 1) = \tau}} \sum_{r = 0}^{RL - 1} \Pr(t, c, s, r) \Pi_s^+(t, c, s);
\]
\[
P_B^N(\tau) = \smashoperator[l]{\sum_{t, c\colon T(t + 1, c, N) = \tau}} \Pr(t, c, N-1) \Pi_s(t, c, N-1).
\]

\section{Numerical Results}
\label{sec:results}

\begin{figure}[!bp]
\centering
  \includegraphics[width=0.8\linewidth]{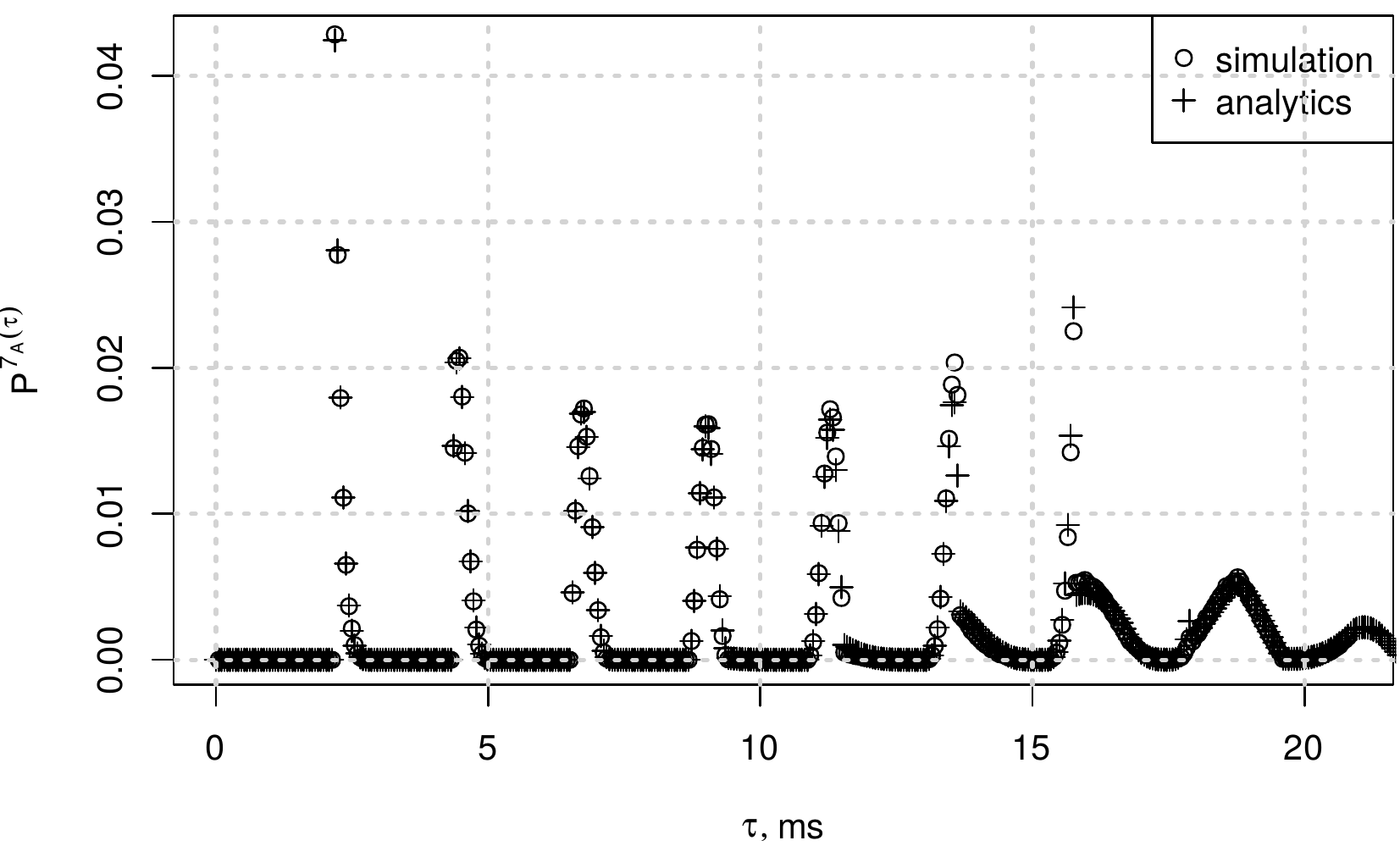}
  \includegraphics[width=0.8\linewidth]{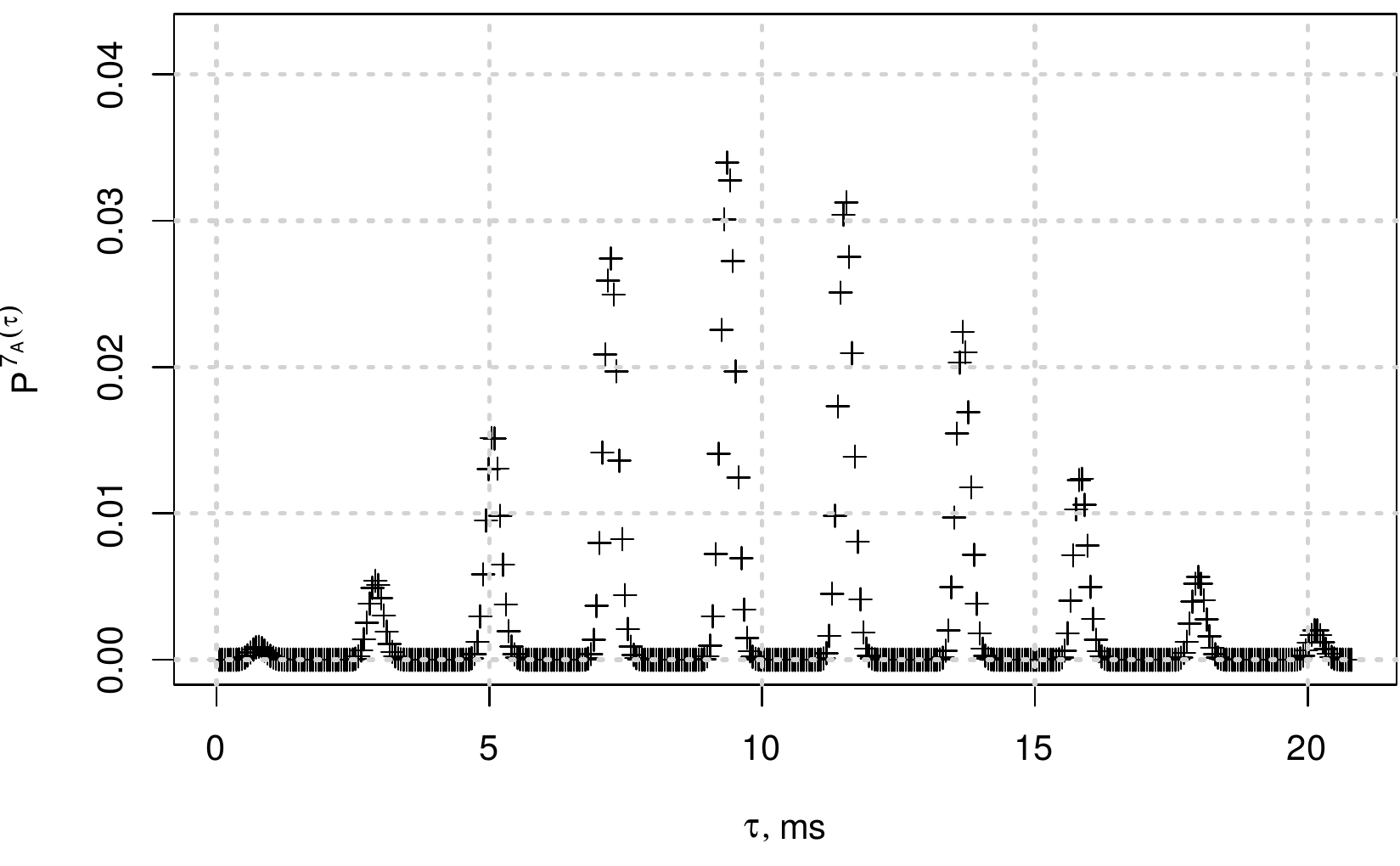}
   \caption{
   \label{fig:delay} Probability distribution  $P^7_A(\tau)$ of the time needed to one of 7 STAs to deliver its frame, obtained with the developed model and simulation (top), and with the model from \cite{liu2013power} (bottom).}
\end{figure}

To validate the model, we compare its results with the ones obtained with the well-known ns-3 simulator \cite{ns-3}. In the experiments, we consider an IEEE 802.11ah network with STAs located 1 meter from the AP, which guarantee that only collisions cause transmission errors. Each STA has a frame destined for the AP. The AP selects $N$ STAs and assigns them a RAW slot. When the RAW slot begins, the selected STAs start accessing the channel to transmit their frames. We assume infinite RAW slot duration to obtain distribution of the time needed to transmit frames.
We set $CW_{min} = 16$, $CW_{max} = 1024$, $AIFS = SIFS + 3 \sigma$, $T_e = \sigma = \SI{52}{\us}$, $RL = 7$. Data frames are 100~byte long and they are transmitted with MCS0 \cite{comcom11ah} at the 2 MHz channel. So $T_c \approx T_s \approx 42 \sigma$.

\begin{figure}[!tp]
\centering
  \includegraphics[width=0.8\linewidth]{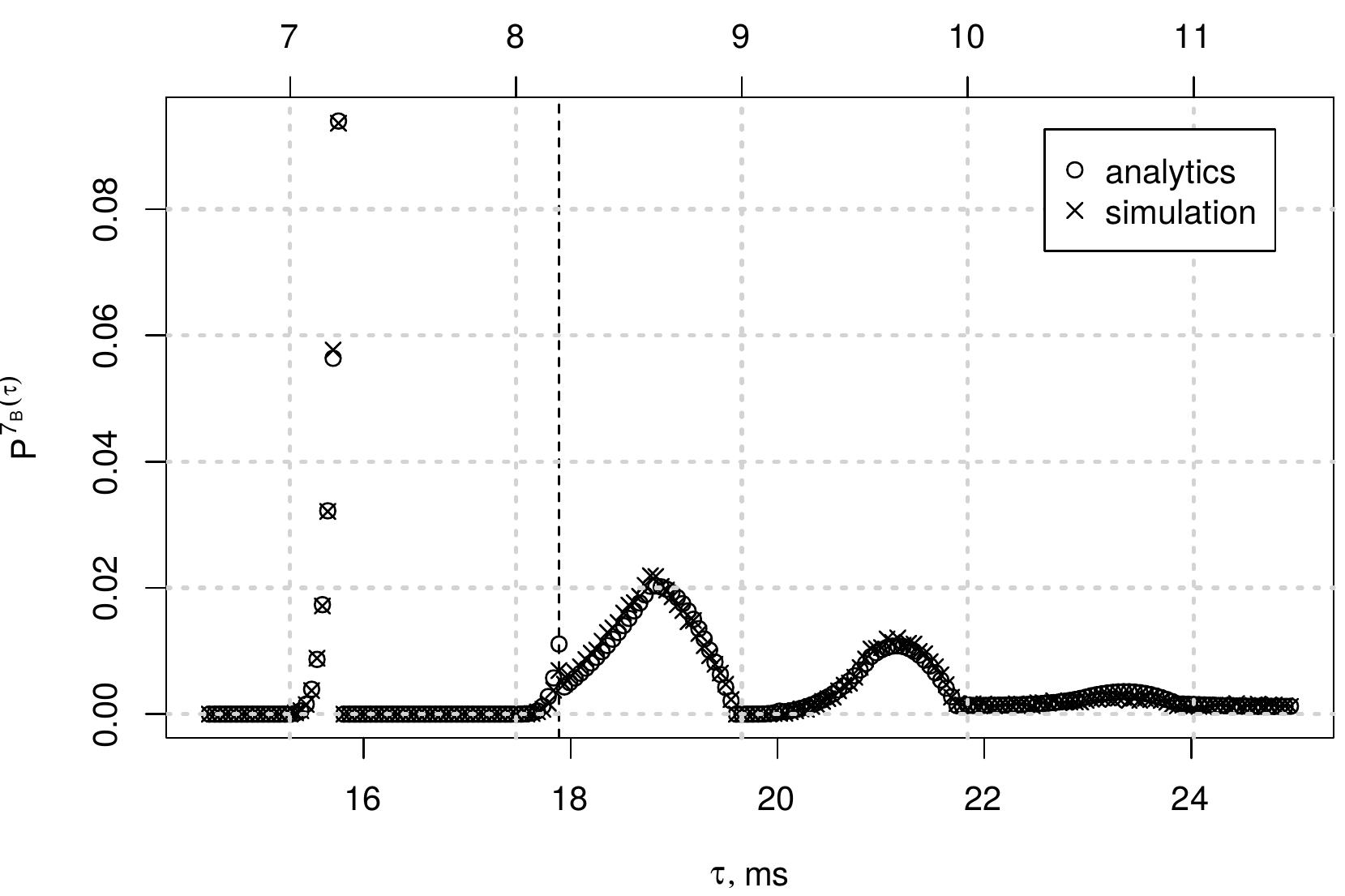}
  \includegraphics[width=0.8\linewidth]{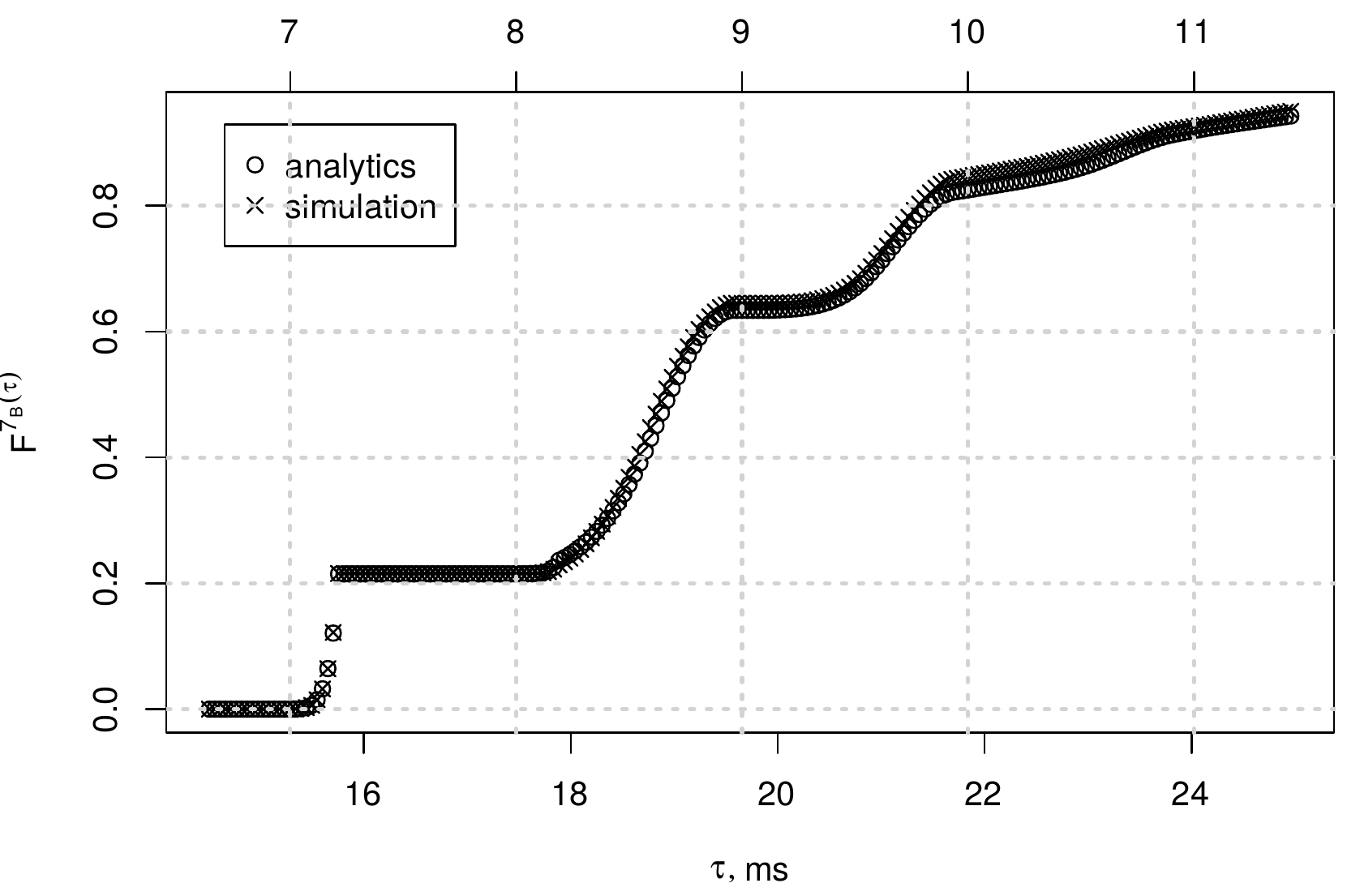}
  \caption{\label{fig:times}PDF ($P^7_B(\tau)$) and CDF ($F^7_B(\tau)$) of the time needed for 7 STAs to deliver their frames.}
\end{figure}

First of all, we find the distribution of time needed for one of $N = 7$ STAs to deliver a data frame (problem A).
The results obtained with the designed model and simulation are shown in Fig.~\ref{fig:delay} (top) and prove the high accuracy of our model.  At the same time, the model developed in \cite{liu2013power}, i.e., under the assumption that the probability of transmission in a slot depends only on the slot number, cannot be used to find the required probability distribution, see Fig.~\ref{fig:delay} (bottom). In both figures, intervals between peaks correspond to a non-empty virtual slot duration.

Fig.~\ref{fig:times} shows the probability distribution function $P_B(\tau)$ and cumulative distribution function $F_B(\tau)$ of the time needed for 7 STAs to deliver their frames.
The plots have two horizontal axes. The top  axis shows time in $T_s$, while the time unit of the bottom one is ms.

Let us analyze the obtained results in details. In particular, the first peak of $P_B(\tau)$ (between $7 T_s$ and $8 T_s$) corresponds to cases when frames are transmitted without collisions.

The time between $8 T_s$ and $9 T_s$ corresponds to eight transmission attempts, i.e. to the cases with the only collision of two frames. Two peaks in this interval correspond to cases when the last slot is occupied by the first transmission attempt (at $\approx$ 18 ms) or the second transmission attempt (at $\approx 19$ms). In the first case, the time of process termination is determined by the backoff time distribution for the first transmission attempt.  Therefore, the peak is sharp, similar to the first peak. In the second case, the peak is sawtooth-like as it is determined by the convolution of the first and the second backoff times.

According to the idea introduced in Section \ref{sec:problem}, we choose the RAW slot duration in such a way that frame transmission by a chosen STA or by all STAs ends by the end of the slot with some predefined probability. Fig.~\ref{fig:median} shows $0.5$ quantile (median), $0.95$ quantile, $0.99$ quantile and $0.999$ quantile of process A termination time for various number of STAs. Similar curves were obtained for process B. Based on these results, we can select an appropriate RAW slot duration for any number of STAs.

\begin{figure}[!htbp]
\centering
\includegraphics[width=0.8\linewidth]{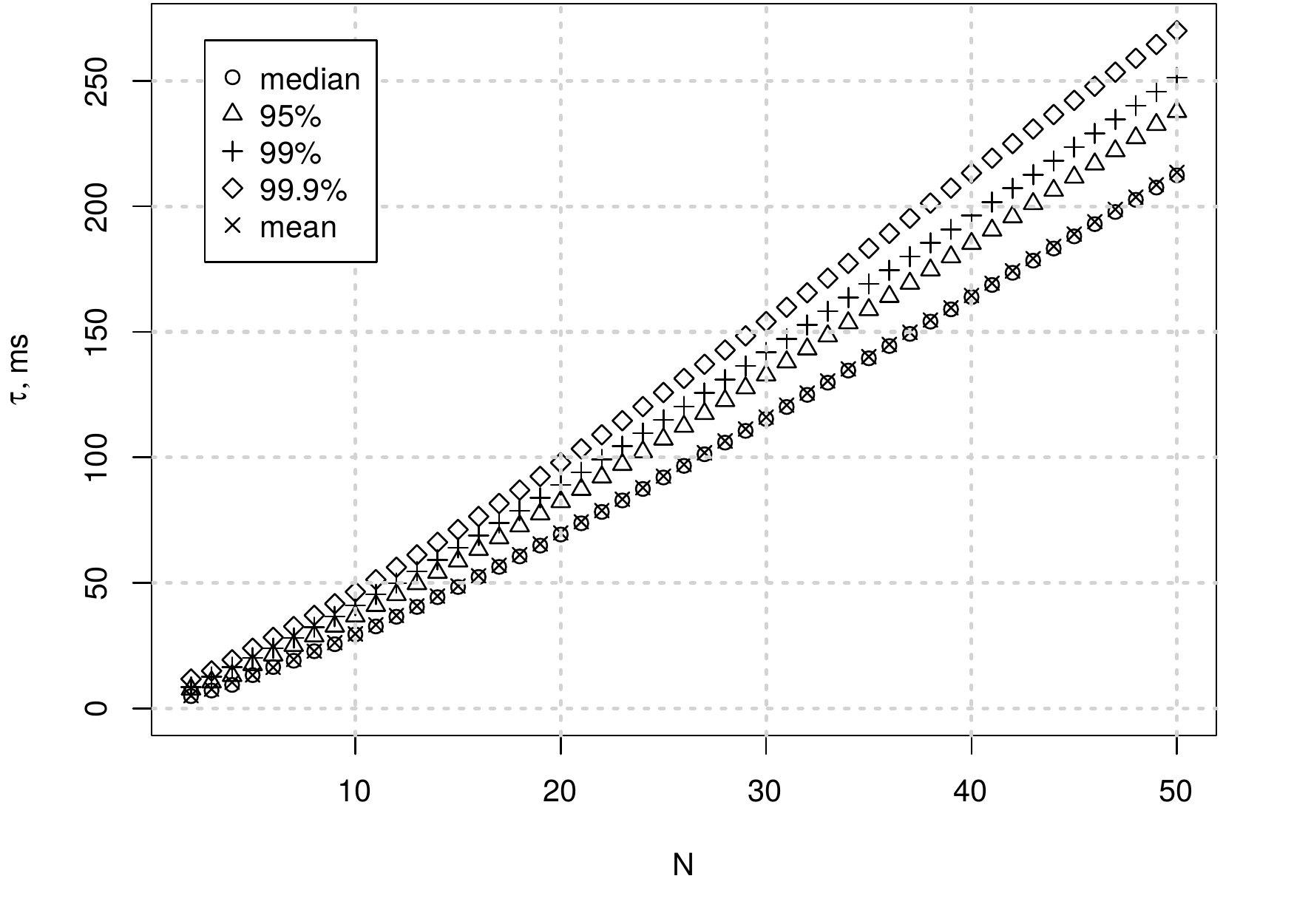}
\caption{\label{fig:median} Process A termination time.}
\end{figure}

Let us consider a more natural scenario with an AP and $N=1000$ sensor STAs, each of which has a packet for transmission with probability $p=0.3$ by the RAW slot beginning. It is obvious that PDF of the time needed to one the STAs to deliver its frame is $P_{A}^{(N,p)}(\tau)=\sum\limits_{k}P_A^k(\tau) \binom {n} {k} p^k(1-p)^{(1-k)}$.
If we need to guarantee $F_{A}^{(N,p)}(\tau)$ higher than some predefined threshold, say 0.9, we can select the required RAW slot duration accordingly, see Fig.~\ref{fig:complex}. However, the obtained RAW slot is too long and even exceeds the maximal RAW slot duration allowed by the standard (246.14 ms).

\begin{figure}[!htbp]
\centering
\includegraphics[width=0.8\linewidth]{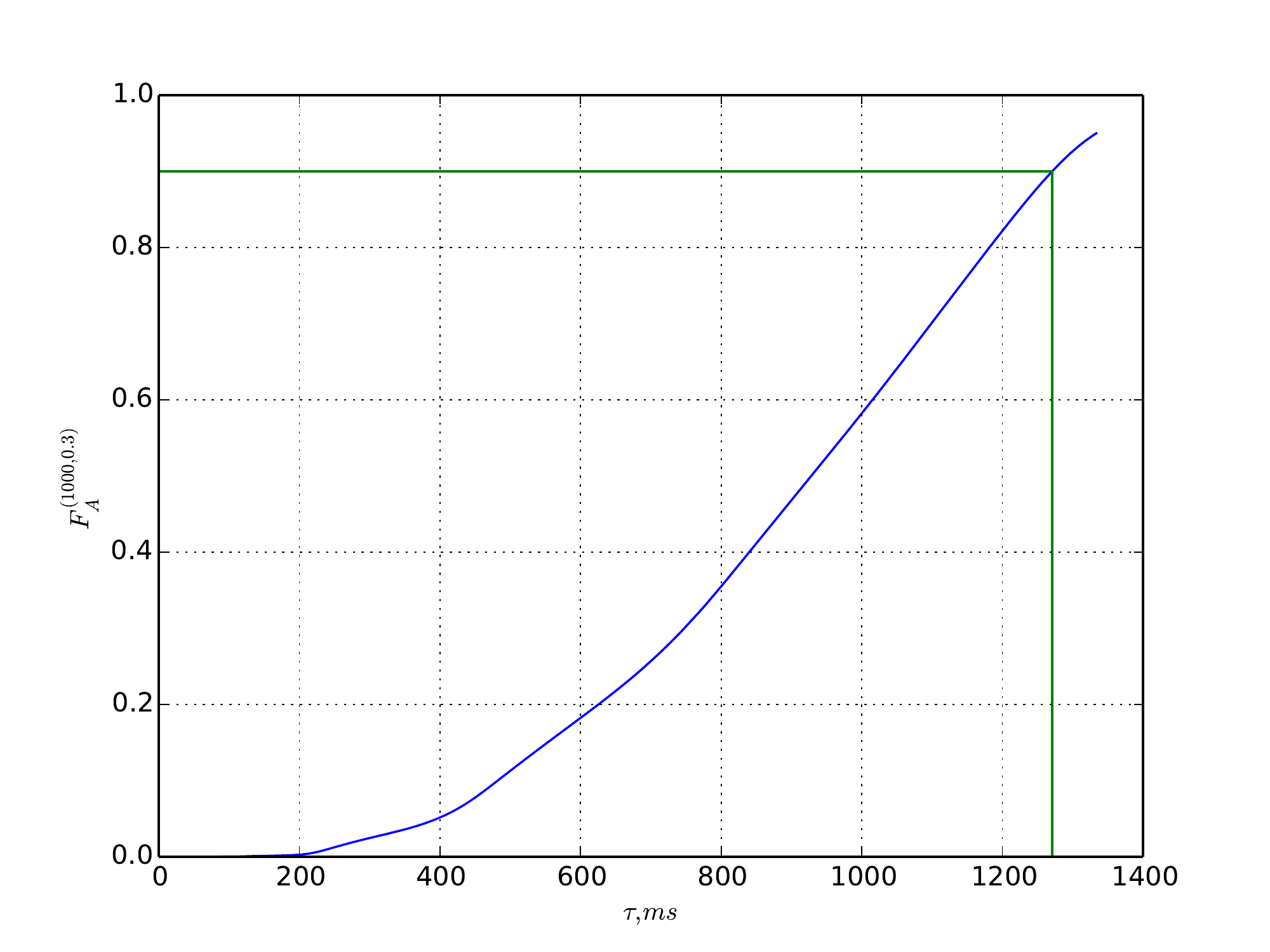}
\caption{\label{fig:complex} CDF $F_{A}^{(1000,0.3)}(\tau)$ of the time needed to a STA to transmit its packet, when each of 1000 STAs has a packet for transmission with probability 0.3.}
\end{figure}

To cope with this, let us divide $N$ STAs into $g$ equal groups and compare the required total duration of RAW slots for various $g$. Apart from moving the slot to the legal range, such a division allows to save channel resources. In particular, as  Fig.~\ref{fig:complex2} shows, the usage of the only group requires 35\% more channel time than when the STAs are divided into 40-50 groups. With higher $N$ and $p$, the effect is more significant.

\begin{figure}[!htbp]
\centering
\includegraphics[width=0.8\linewidth]{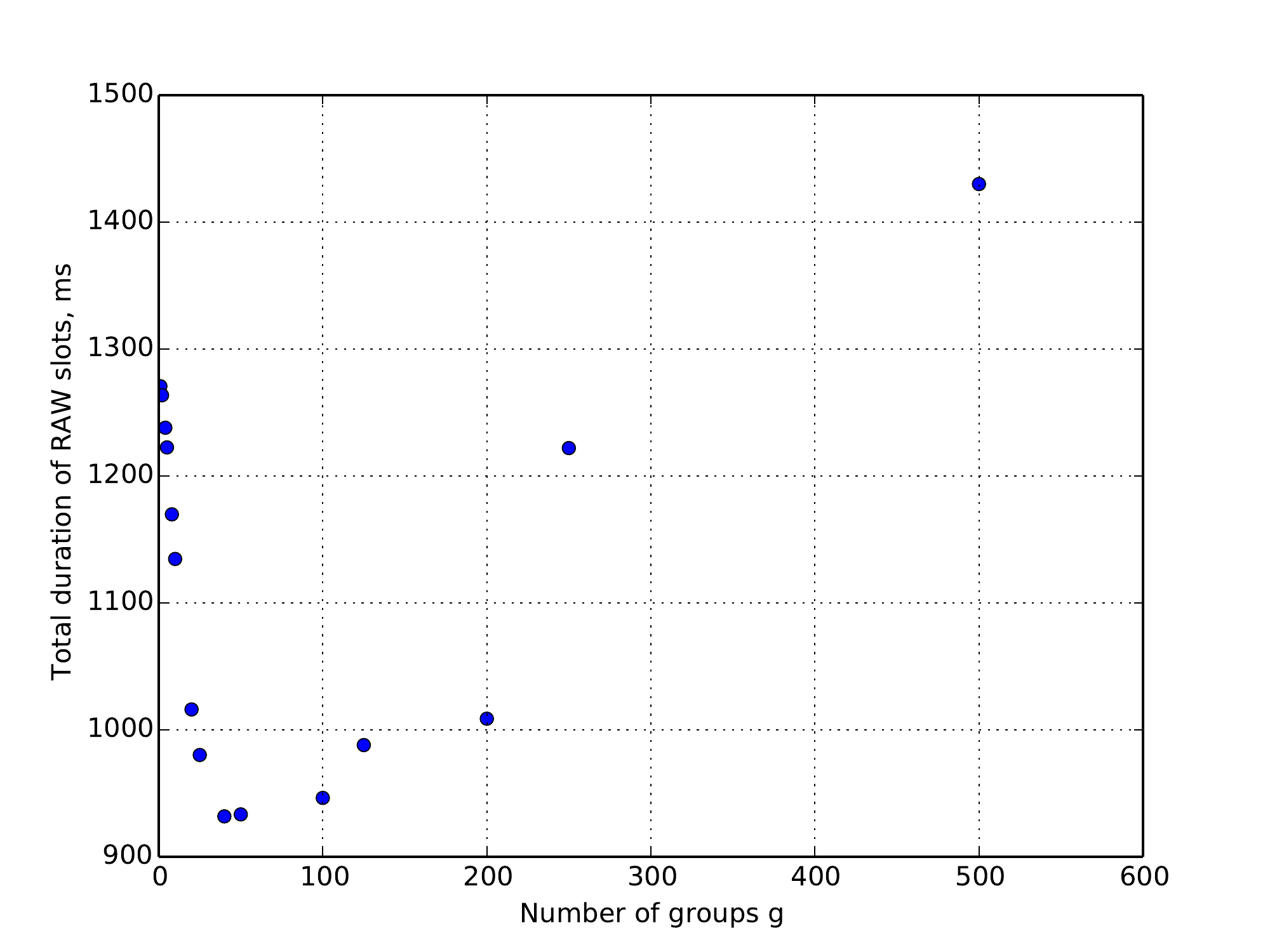}
\caption{\label{fig:complex2} Minimizing reserved channel time}
\end{figure}

\section{Conclusion}
\label{sec:concl}
In this paper, we have studied the process of the channel access by a group of STAs, each of which has the only frame to transmit. The STAs start accessing the channel simultaneously using random channel access. Analysis of such a process is very useful for performance evaluation and improvement of MTC data transmission in IEEE 802.11ah networks. In particular, a similar process takes place when a group of STAs transmit their frames in the RAW slot. Another example is the power management mechanism and sending PS-Polls to retrieve buffered frames stored at the AP.

Specifically, in the paper, we studied the access-in-RAW process from two points of view and developed a mathematical model which allows to find (A) the distribution of time required for an arbitrarily chosen STA to transmit its frame, and (B) the distribution of time required for all STAs to transmit their frames. After estimating the number of STAs having frames to transmit, the AP may use the model to select an appropriate RAW slot duration. We have also demonstrated, how to divide STAs into groups and how to select an appropriate RAW slot duration if the exact number of STAs having packets to transmit is unknown.

\addcontentsline{toc}{section}{Literature}

\bibliographystyle{unsrt}
\bibliography{raw}

\end{document}